%
\documentclass[runningheads]{llncs}
\usepackage{graphicx}
\usepackage{amssymb,amsfonts,amsmath}
\usepackage{booktabs}
\usepackage{cancel}
\usepackage{color}
\usepackage{hyperref}
%

\newcommand{\state}[1]{\mathbf{s}_{#1}}
\newcommand{\KL}[2]{D_{\text{KL}}\left( #1 \| #2 \right)}
\newcommand{\expect}[2]{\mathbb{E}_{#1} \left[ #2 \right]}

\newcommand{\expon}[1]{\exp{\left( #1 \right)}}

\begin{document}
\title{Deriving time-averaged active inference from control principles}
%
%
\author{Eli Sennesh\inst{1}\orcidID{0000-0001-7014-8471} \and
Jordan Theriault\inst{1}\orcidID{0000-0002-4680-0172} \and
Jan-Willem van de Meent\inst{2}\orcidID{0000-0001-9465-5398} \and
Lisa Feldman Barrett\inst{1}\orcidID{0000-0003-4478-2051} \and
Karen Quigley\inst{1}\orcidID{0000-0001-8844-990X}
}

\authorrunning{E. Sennesh et al.}
%
\institute{Northeastern University, Boston MA 02115, USA \\ 
\email{\{sennesh.e,jordan\_theriault,l.barrett,k.quigley\}@northeastern.edu}
\and
University of Amsterdam, 1090 GH Amsterdam, the Netherlands \\
\email{j.w.vandemeent@uva.nl}
}

\maketitle              
\begin{abstract}
Active inference offers a principled account of behavior as minimizing average sensory surprise over time. Applications of active inference to control problems have heretofore tended to focus on finite-horizon or discounted-surprise problems, despite deriving from the infinite-horizon, average-surprise imperative of the free-energy principle. Here we derive an infinite-horizon, average-surprise formulation of active inference from optimal control principles. Our formulation returns to the roots of active inference in neuroanatomy and neurophysiology, formally reconnecting active inference to optimal feedback control. Our formulation provides a unified objective functional for sensorimotor control and allows for reference states to vary over time.
\keywords{Hierarchical control \and path-integral control \and Infinite-time average-cost.}
\end{abstract}

\vspace{-1em}
\section{Introduction}
\vspace{-0.5em}
\label{sec:introduction}
Adaptive action requires the integration and close coordination of sensory with motor signals in the nervous system. Active inference~\cite{friston2017active} provides one of the few available unifying theories of sensorimotor control; it says that the nervous system encodes both sensory and motor signals as afferent predictions and reafferent prediction errors. Sensory predictions induce errors that can only be quashed by updating the predictions, while motoric predictions induce errors that can be quashed by simply moving the body to conform to the predicted trajectory \cite{adams2013predictions}. The free energy principle, following the logic of active inference, says that organisms maintain their self-organization as a whole over time by avoiding surprising interactions between their internal and external environments \cite{friston2010free}. This entails maintaining bodily states within homeostatic ranges~\cite{pezzulo2015active} by issuing sensory, proprioceptive, and interoceptive predictions that minimize errors under a ``prior preference''~\cite{da2020active} or ``non-equilibrium steady-state''~\cite{friston2010generalised} density. Such a density must be stationary throughout time.

Early ``non-equilibrium steady-state'' formulations of active inference provided probability densities over full trajectories of movement and interaction \cite{friston2010generalised,friston2009reinforcement}. In regulatory terms, this corresponds to covariation of bodily states under a ``just enough, just in time''~\cite{sterling2012allostasis} mode of regulation that physiologists have labelled homeostasis~\cite{carpenter2004homeostasis} with time-varying set points, rheostasis~\cite{mrosovsky1990rheostasis}, and recently allostasis~\cite{sterling2012allostasis,corcoran2019allostasis,schulkin2019allostasis,Tschantz2022}. A control theorist would call these trajectories or set-points a \emph{reference trajectory} or ``reference signal'' that a controller tries to track. However, many more recent formulations of active inference use state-space models with fixed ``prior preferences'' that correspond to homeostatic set-points or ranges \cite{da2020active}. They also typically employ either finite time horizons or exponential discounting of expected free energy, unlike the original formulation of active inference in terms of average surprise over time. A control theorist would refer to these as reference states rather than reference trajectories.

This paper will rederive active inference as minimization of path-entropy over an infinite time horizon. The paper's formulation will derive from the first principles of infinite-horizon, average-cost optimal control; will allow preferences to vary according to their own generative model, and will unify motor active inference~\cite{adams2013predictions} (mAI) with decision active inference~\cite{Smith2022} (dAI). This will also unify the computational principles behind motor active inference - the ``equilibrium point''~\cite{Feldman1986,Latash2010} or ``reference configuration''~\cite{Feldman2015} hypotheses - with the higher-level study of sensorimotor behavior as optimal feedback control. Finally, the paper's formalism will provide a unified free energy functional for perception, motor action, and decision making over time.

Section~\ref{sec:notations} will explain this paper's notation and lay out an example generative model supporting the necessary features for the intended formulation of active inference. Section~\ref{sec:fep} will summarize belief updating in generative models, give a recognition model to match the generative model, describe the free energy principle for perceptual inference, and finish by describing active inference. Section~\ref{sec:feedback_pic} will then extend active inference to the setting of an explicit reference model prescribing behavior and give the control criterion corresponding to active inference under the free energy principle. Section~\ref{sec:time_averaged_pic} will derive the resulting free energy bounds whose optimization will yield a Bellman-optimal feedback controller based on the generative and recognition models. Section~\ref{sec:apic_summary} will discuss related work; consider implementation issues for infinite-horizon, average-cost active inference; and conclude. Appendix~\ref{app:derivations} will provide derivations for equations that would otherwise have broken the flow of the paper.

\vspace{-1em}
\section{Preliminaries and notation}
\vspace{-0.5em}
\label{sec:notations}

\begin{figure}[t]
    \centering
    \includegraphics[width=0.6\columnwidth]{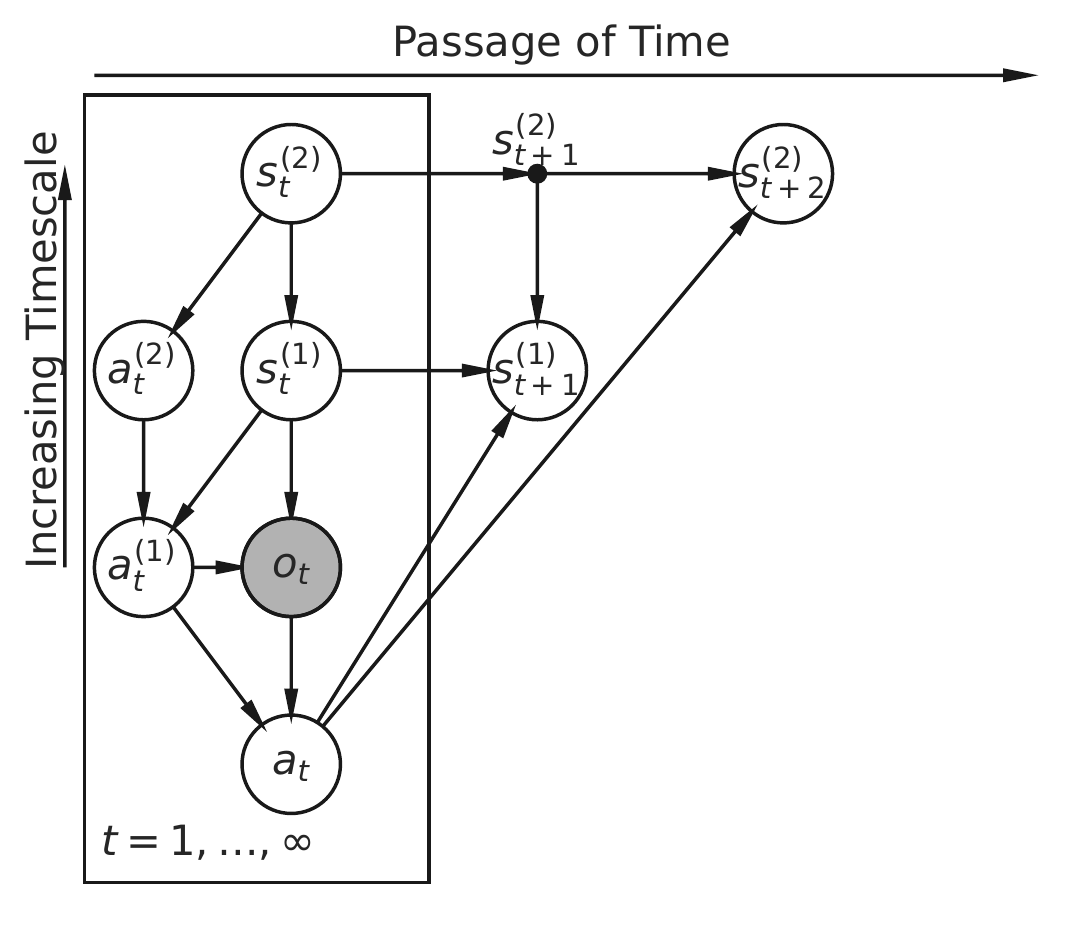}
    \caption{A hierarchical generative model we use as an example in this paper.  Two variables \((s^{(1)}_{t}, s^{(2)}_{t})\) denote unobserved latent states, and each $a^{(k+1)}_t$ parameterizes a reference model for $s^{(k)}_t$.  $o_t$ represents observed sensory outcomes, and $a_t$ represents the feedback control actions generated by motor reflex-arcs.}
    \label{fig:allostasis_pgm}
    \vspace{-1em}
\end{figure}

This paper will explain its formulation of active inference in terms of the discrete-time graphical model in Figure~\ref{fig:allostasis_pgm}. Like many generative models used to lay out active inference~\cite{Kiebel2008,Pezzulo2018}, this model employs a hierarchy of temporal scales. We number these timescales from the shortest to the longest, while numbering random variables with discrete timesteps $t \in 1\ldots T$ from left to right. For simplicity, we also restrict our graphical model to only three levels of hierarchy: observable variables, fast latent variables, and slow random variables. Following those rules, observations $o_t$ and feedback motor actions $a_t$ are 1-Markov; they ``tick'' at every time-step. The fast latent variables $s^{(1)}_t$ and $a^{(1)}_t$ also change at every time-step.  At the next level up, slow latent variables $s^{(2)}_t$ and $a^{(2)}_t$ are 2-Markov; they ``tick'' every second time-step $t+2$. We assume arbitrary state spaces for all random variables, latent and observed, without any discrete or linear-Gaussian assumptions about their conditional densities. Some evidence suggests~\cite{howard2022formal} that the brain may in fact represent time by learning a combination of frequencies in the Laplace domain~\cite{shankar2012scale}, and so the use of only three levels in the model should not be taken to describe anything biological.

We write the combined latent states
\begin{align*}
    s^{(1:2)}_t &= (s^{(1)}_t, s^{(2)}_t)
\end{align*}
and the ``actions'' or reference states
\begin{align*}
    a^{(0:2)}_t &= (a_t, a^{(1)}_t, a^{(2)}_t).
\end{align*}
We can therefore write the complete state at a time-step $t$ as
\begin{align*}
    \state{t} &= (o_t, s^{(1:2)}_t, a^{(0:2)}_{t}).
\end{align*}
We will denote probability densities over actions as policies $\pi$ and probability densities in the generative model as $p_\theta$ (with arbitrary parameters $\theta$). The lowest level of conditional probability densities then consists of
\begin{align*}
    p_{\theta}(a_t, o_t \mid a^{(1)}_t, s^{(1)}_t) &= \pi(a_t \mid o_t, a^{(1)}_t) p_{\theta}(o_t \mid a^{(1)}_{t}, s^{(1)}_{t}),
\end{align*}
the fast latent state level consists of
\begin{align*}
    p_{\theta}(a^{(1)}_t, s^{(1)}_t \mid s^{(1)}_{t-1}, a^{(2)}_{t}, s^{(2)}_{t}, a_{t-1}) &= \pi(a^{(1)}_t \mid s^{(1)}_{t}, a^{(2)}_t) p_{\theta}(s^{(1)}_t \mid s^{(1)}_{t-1}, s^{(2)}_{t}, a_{t-1}),
\end{align*}
and the slow latent state level consists of
\begin{align*}
    p_{\theta}(a^{(2)}_t, s^{(2)}_t \mid s^{(2)}_{t-1}, a_{t-1}) &= \pi(a^{(2)}_t \mid s^{(2)}_t) p_{\theta}(s^{(2)}_t \mid s^{(2)}_{t-1}, a_{t-1}).
\end{align*}
We write the complete state of the \emph{generative model} $p_\theta$ at a time-step \(t\) with its associated conditional densities as
\begin{multline}
    \label{eq:state_conditional}
    p_{\theta}(\state{t} \mid \state{t-1}) = p_{\theta}(a_t, o_t \mid a^{(1)}_t, s^{(1)}_t) p_{\theta}(a^{(1)}_t, s^{(1)}_t \mid s^{(1)}_{t-1}, a^{(2)}_{t}, s^{(2)}_{t}, a_{t-1})
    \\
    p_{\theta}(a^{(2)}_t, s^{(2)}_t \mid s^{(2)}_{t-1}, a_{t-1}),
\end{multline}
and the joint density over time (conditioned on a fixed initial state $\state{0}$) as
\begin{align}
    \label{eq:state_joint}
    p_\theta(\state{1:T} \mid \state{0}) &= \prod_{t=1}^{T} p_{\theta}(\state{t} \mid \state{t-1}).
\end{align}
The model treats outcomes $o_t$ as observed, $a_t$ as a feedback-driven motor action, and other variables as latent.
Inspired by the referent configuration account of motor control~\cite{Feldman2015,Latash2019}, the model treats $a^{(1:2)}_t$ as parameterizing ``prior preferences'' or referent configurations
\begin{align}
    \label{eq:reference_distribution}
    R(\state{t}) &= R(o_t \mid a^{(1)}_t) R(s^{(1)}_t \mid a^{(2)}_{t}).
\end{align}
$a_t$ models the feedback control action of motor reflexes. $a^{(1)}_t$ parameterizes a reference state for $o_t$. $a^{(2)}_t$ parameterizes a reference model for $s^{(1)}_t$. Since reference trajectories direct action, we consider their distributions to be policies
\begin{align}
    \label{eq:policy}
    \pi(a_t, a^{(1:2)}_t \mid o_t, s^{(1:2)}_t) &= \pi(a_t \mid a^{(1)}_t, o_t) \pi(a^{(1)}_t \mid s^{(1)}_{t}, a^{(2)}_t) \pi(a^{(2)}_t \mid s^{(2)}_t).
\end{align}
$s^{(2)}_t$, as the highest level latent state, has no reference model. In neuroscience, it might correspond
to predictive modeling at the highest level of the neuraxis or cortical hierarchy \cite{Barrett2015a,Livneh2020,Quigley2021}. In an engineering setting, it might contain both environment and task state~\cite{Nasiriany2019,ringstrom2020jump,Tang2021} or reward machine~\cite{Icarte2018,Camacho2019} state.

The likelihood \(p_{\theta}(o_t \mid a^{(1)}_t, s^{(1)}_t)\) does not specify the reference model; it instead provides the statistical grounding for both the latent states and the reference model parameters. The model here does not assume that reference densities at all levels are prespecified or learned, but instead leaves that issue open.

\begin{table}[t]
    \centering
    \begin{tabular}{||c|l||}
        \hline
        $p_\theta$ & Probability density for the generative model \\
        $q_\phi$ & Probability density for the recognition model \\
        $R$ & Probability density for the reference model \\
        $\pi$ & Policy density over actions and references \\
        $t$ & Discrete time-step index \\
        $o_t$ & Observations \\
        $s^{(1:2)}_t$ & Unobserved model states \\
        $a^{(1:2)}_t$ & Parameters to a reference model $R$ \\
        $a_t$ & Control actions \\
        $\state{t}$ & A complete model state for time $t$ \\
        \hline
    \end{tabular}
    \vspace{0.5em}
    \caption{Random variable names used in this paper}
    \label{tab:apic_random_variables}
    \vspace{-1em}
\end{table}

We then designate as cost functions the surprisals over complete states (under the reference model) and over observations (under the generative model)
\begin{align}
    \label{eq:allostatic_objective}
    J(\state{t}) &= -\log R(\state{t}), \\
    \label{eq:likelihood_objective}
    L(\state{t}) &= -\log p_{\theta}(o_t \mid a^{(1)}_t, s^{(1)}_t).
\end{align}
Equation~\ref{eq:allostatic_objective} equals the negative of the reward function used in distribution-conditioned reinforcement learning \cite{Nasiriany2021} and can represent any control objective.

This paper will condition behavioral trajectories upon an initial state \(\state{0}\) as context. This initial state corresponds to the beginning of a behavioral episode. The following states from time $1$ until time $T$, sampled from a generative model with parameters $\theta$, are then written as sampled from the joint density
\begin{align*}
    \state{1:T} &\sim p_\theta(\state{1:T} \mid \state{0}).
\end{align*}
This section has described a generative model and a decision objective under which to formulate active inference. Table~\ref{tab:apic_random_variables} summarizes the notation the rest of the paper will use. The next section will lay out belief updating for the generative model, a recognition model to represent updated beliefs, and the free energy principle for perceptual inference. Later sections will show how to extend free-energy minimization to approximate a feedforward planner (in the generative model) and feedback controller (in the recognition model) that minimize surprisal under the reference model.

\vspace{-1em}
\section{Surprise minimization and the free energy principle}
\vspace{-0.5em}
\label{sec:fep}

Section~\ref{sec:notations} gave a generative model and a way of writing arbitrary preferences as probability densities. However, the formalism constructed so far would induce a merely feedforward model-based planner, one which could not correct upcoming movements in light of observations. Bayes' rule specifies how to update probabilistic beliefs about unobserved variables in light of observations:
\begin{align}
    \label{eq:bayes}
    p_\theta(s^{(1:2)}_{1:t}, a^{(1:2)}_{1:t} \mid o_{1:t}, \state{0}) &= \frac{p_\theta(o_{1:t}, s^{(1:2)}_{1:t}, a^{(1:2)}_{1:t} \mid \state{0})}{
        p_\theta(o_{1:t} \mid \state{0})
    }.
\end{align}
The denominator of Equation~\ref{eq:bayes} is called the marginal likelihood, and its negative logarithm is the surprise under the generative model
\begin{align*}
    h(o_{1:t}) &= -\log p_\theta(o_{1:t} \mid \state{0}).
\end{align*}
Friston's free energy principle~\cite{Friston2010} posits that a system, organism, or agent in a changing environment preserves its structure against the randomness of its environment by embodying a generative model of its environment and minimizing that model's long-term average surprise
\begin{align}
    \label{eq:surprise}
    H(o_t) &= \lim_{T\rightarrow\infty} \frac{1}{T} -\log p_\theta(o_{1:T} \mid \state{0}).
\end{align}
In most generative models, neither the denominator of Equation~\ref{eq:bayes} nor the surprise of Equation~\ref{eq:surprise} are analytically tractable, and Bayesian inference requires approximation. Active inference in particular approximates optimal belief updating by substituting a tractable \emph{recognition model} $q_\phi$ (with parameters $\phi$) for the posterior distribution
\begin{align*}
    s^{(1:2)}_{1:T}, a^{(1:2)}_{1:T} &\sim q_\phi(s^{(1:2)}_{1:T}, a^{(1:2)}_{1:T} \mid o_{1:T}, a_{1:T}, \state{0}), \\
    q_\phi(s^{(1:2)}_{1:T}, a^{(1:2)}_{1:T} \mid o_{1:T}, a_{1:T}, \state{0}) &= \prod_{t=1}^{T} q_\phi(s^{(1:2)}_{t}, a^{(1:2)}_{t} \mid o_{t}, a_{t}, \state{t+1}, \state{t-1}).
\end{align*}
This recognition model is conditioned on both the previous time-step $t-1$ and the next time-step $t+1$, and can therefore perform retroactive belief updates.

To improve the recognition model's approximation to the posterior distribution, active inference entails evaluating and minimizing the \emph{variational free energy} (Equation~\ref{eq:vfe}, derivation in Proposition~\ref{prop:vfe} in Appendix~\ref{app:derivations})
\begin{multline}
    \label{eq:vfe}
    \mathcal{F}_{\theta, \phi}(t) = \expect{q_\phi}{- \log p_{\theta}(o_t \mid a^{(1)}_t, s^{(1)}_t)} + \\ \KL{q_\phi(s^{(1:2)}_{t}, a^{(1:2)}_{t} \mid o_{t}, \state{t+1}, \state{t-1})}{p_{\theta}(s^{(1:2)}_t, a^{(1:2)}_t \mid \state{t-1})}.
\end{multline}
The free energy serves as a tractable upper bound to the surprise
\begin{align*}
    H(o_t) &\leq \mathcal{F}_{\theta,\phi}(t).
\end{align*}
Intuitively, given an observation at each time-step $t$, minimizing the free energy amounts to updating the beliefs of the recognition model $q_\phi$ to approximate the posterior distribution of the generative model $p_\theta$. A model-based controller can then use those updated beliefs to revise or plan actions into the future. Active inference has therefore often been formulated as using action to minimize this free energy bound. Such a move then prompts the question of how to encode a desirable reference trajectory into the generative model or another term of the free energy bound \cite{Friston2012}. The next section will define notions of surprise and free energy that encode fit to an explicitly specified reference trajectory.

\section{Active inference with an explicit reference}
\label{sec:feedback_pic}
Minimizing free energy fits a model-based controller's generative and recognition models to ongoing trajectories of observations. However, for the updated beliefs to determine action, the controller must use them to evaluate the fit to the reference trajectory (Equation~\ref{eq:allostatic_objective}) and emit motor actions. Fortunately, Thijssen~\cite{Thijssen2015} gave  an interpretation of probabilistic updating in terms of control: the recognition model $q_\phi$ acts as a \emph{state-feedback controller}, for which the variational free energy becomes a running control cost. This section will show how to evaluate fit to the reference trajectory under the recognition model, and specify the functional it must optimize to serve as a feedback controller.

The generative model in Section~\ref{sec:notations} and recognition model in Section~\ref{sec:fep} use discrete time-steps and explicitly specify the ``pathwise'' reference model separately from the generative and recognition models. The surprise to minimize is therefore the long run average of the cross-entropy
\begin{align}
    \label{eq:ref_cross_entropy}
    H(q_\phi, R) &= \lim_{T\rightarrow\infty} \frac{1}{T} \sum_{t=1}^{T} \expect{\state{t} \sim q_\phi}{-\log R(\state{t})}.
\end{align}
Equation~\ref{eq:ref_cross_entropy} gives the long-term average surprise of using the reference model to approximate the posterior beliefs of the recognition model. Replacing the reference model with the forward generative model would then amount to minimizing the long-term average surprise (entropy); this generalization treats the reference model as specifying a trajectory for the feedback controller to track.

Standard properties of free energy functionals imply that a desirable objective functional would upper bound the sum of reference surprise and sensory surprise
\begin{align}
    \label{eq:surprisal_bound}
    H(R(\state{t})) + H(o_t) &\leq \mathcal{J}(t).
\end{align}
Such a free energy functional would balance the reference model's surprise (the first term) with the generative model's surprise (the second term). In fact it can be formed simply by adding Equation~\ref{eq:ref_cross_entropy} to Equation~\ref{eq:vfe}
\begin{align}
    \label{eq:pic_objective}
    \mathcal{J}_{\theta,\phi}(t) &= H(q_\phi, R) + \mathcal{F}_{\theta,\phi}(t) \\
    &= \expect{\state{t} \sim q_\phi}{J(\state{t})} + \mathcal{F}_{\theta,\phi}(t),
\end{align}
and expanding the term for Equation~\ref{eq:vfe} will yield a long-form expression
\begin{multline}
    \label{eq:pic_objective_long}
    \mathcal{J}_{\theta,\phi}(t) = \expect{q_\phi}{J(\state{t})} + \expect{q_\phi}{L(\state{t})} + \\
    \KL{q_\phi(s^{(1:2)}_{t}, a^{(1:2)}_{t} \mid o_{t}, \state{t+1}, \state{t-1})}{p_{\theta}(s^{(1:2)}_t, a^{(1:2)}_t \mid \state{t-1})}.
\end{multline}
Equation~\ref{eq:pic_objective_long} gives an objective functional in terms of
\begin{itemize}
    \item The reference surprisal under the recognition model,
    \item The observation surprisal under the recognition model, and
    \item The deviation of the recognition model from the generative model.
\end{itemize}

Neuroscientists~\cite{daw2000behavioral,shadmehr2020vigor} and ecologists~\cite{stephens2019foraging} have found evidence that animals optimize a \emph{global capture rate} $\Bar{\mathcal{J}}$ in many decisions: rewards minus costs, divided by time. Active inference modelers typically ground the construct of ``reward'' in reduction of surprise \cite{morville2018homeostatic}, and so a broad field of evidence comes together to support the time-averaging functional form implied by Bayesian mechanics in both their ``steady-state density'' and ``pathwise'' formulations \cite{ramstead2022bayesian}. The next section will therefore apply the principles of stochastic optimal feedback control for the partially observed setting and \emph{average-cost criterion}, and solve the resulting control problem to formulate active inference.

\section{Deriving time-averaged active inference from optimal control}
\label{sec:time_averaged_pic}

The average-cost criterion for optimality entails minimizing the \emph{indefinite} surprise rate with respect to the generative model $p_{\theta}(\state{1:T} \mid \state{0})$
\begin{align}
    \label{eq:indefinite_capture_rate}
    \Tilde{\mathcal{J}}(\state{0}) &= \lim_{T\rightarrow\infty} \expect{p_\theta(\state{1:T} \mid \state{0})}{
        \Bar{\mathcal{J}}_{\theta,\phi}(\state{1:T})
    }.
\end{align}
This minimization requires estimating Equation~\ref{eq:indefinite_capture_rate} for each behavioral episode in context, a ``global surprise rate'' in terms of $\mathcal{J}_{\theta,\phi}(t)$
\begin{align}
    \label{eq:global_capture_rate}
    \Bar{\mathcal{J}}_{\theta,\phi}(\state{1:T}) &= \frac{1}{T} \sum_{t=1}^{T} \mathcal{J}_{\theta,\phi}(t).
\end{align}
Plugging Equation~\ref{eq:pic_objective_long} into Inequality~\ref{eq:surprisal_bound} shows that minimizing Equation~\ref{eq:global_capture_rate} will, by proxy, minimize the reference and sensory surprise in the context of a sampled state trajectory $\state{0:T}$. This estimation does not require a prespecified episode length $T$, and can be performed under the generative model
\begin{align}
    \label{eq:average_capture_rate}
    \Bar{\mathcal{J}}_{\theta,\phi}(\state{0}) &= \expect{\state{1:T}\sim p_{\theta}(\state{1:T} \mid \state{0})}{
        \Bar{\mathcal{J}}_{\theta,\phi}(\state{1:T})
    }.
\end{align}
Having estimates of Equation~\ref{eq:average_capture_rate} will enable minimizing the mean-centered surprise at each time-step
\begin{align}
    \label{eq:advantage_function}
    h(t; \state{0}) &= \mathcal{J}_{\theta,\phi}(t) - \Bar{\mathcal{J}}(\state{0}).
\end{align}
The \emph{differential Bellman equation}~\cite{Todorov2009a} defines optimal behavior as recursively minimizing the mean-centered surprise at each time-step, or \emph{surprise-to-go}
\begin{align}
    \label{eq:differential_bellman}
    \Tilde{H}^{*}(t; \state{0}) &= h(t; \state{0}) + \min_{a_t} \expect{\state{t+1}\sim p_\theta(\cdot \mid \state{t})}{\Tilde{H}^{*}(t+1; \state{0})}.
\end{align}
The minimization over actions in Equation~\ref{eq:differential_bellman} assumes a fixed action space and feedforward planning, which may result in very high-dimensional recursive optimization problems. These assumptions also prove empirically, as well as computationally, problematic. Organisms are not born knowing all their affordances~\cite{Cisek2010}; they learn them~\cite{Pezzulo2016}. Noise~\cite{faisal2008noise,manohar2015reward}, uncertainty~\cite{gallivan2018decision}, and variability~\cite{scholz1999uncontrolled} are ubiquitous in motor control, and so movement must be stabilized by online feedback.

Stochastic optimal \emph{feedback} control therefore requires an optimality principle that allows for integrating observations between action steps. Rather than recursively optimize individual actions, Equation~\ref{eq:soft_differential_bellman} below therefore instead considers optimality of the feedback-stabilized transition density
\begin{align}
    \label{eq:soft_differential_bellman}
    \Tilde{H}^{*}(t; \state{0}) &= h(t; \state{0}) + \min_{q_\phi} \expect{\state{t+1}\sim q_\phi(\cdot \mid \state{t})}{\Tilde{H}^{*}(t+1; \state{0})}.
\end{align}
Equation~\ref{eq:soft_differential_bellman} defines an optimal controller as one that achieves optimal state transitions; individual actions act only as parameters to the optimal transition density. These optimal state transitions take the form of a generative model for agency, in which the generative model $p_\theta(\state{t+1} \mid \state{t})$ produces feasible state transitions and the Bellman optimality criterion ``weighs'' them according to their surprise-to-go
\begin{align}
    \label{eq:stochastic_reference}
    q^{*}(\state{t+1} \mid \state{t}) &= \frac{
        \expon{-\Tilde{H}^{*}(t+1; \state{0})} p_\theta(\state{t+1} \mid \state{t})
    }{
        \expect{
            \state{t+1} \sim p_\theta(\cdot \mid \state{t})
        }{
            \expon{-\Tilde{H}^{*}(t+1; \state{0})}
        }
    }.
\end{align}
The denominator of Equation~\ref{eq:stochastic_reference} would typically correspond to the marginal probability of an observation. Here it consists of the present state's expected surprise-to-go weight under the generative model. Potential future states that lead to high surprise under the reference model will have high surprise-to-go and therefore low weight under Equation~\ref{eq:stochastic_reference}. Present states that lead to states closely fitting the reference trajectory will have low surprise-to-go, resulting in a high denominator that spreads weight around among possible future states.

The availability of a closed-form density for the optimal transition density will help simplify the differential Bellman equation itself. Proposition~\ref{prop:path_integral_bellman} (in Appendix~\ref{app:derivations}) shows that by substituting Equation~\ref{eq:stochastic_reference} into Equation~\ref{eq:soft_differential_bellman} we can obtain a path-integral expression for the optimal differential surprise-to-go with both the feedforward controller $p_\theta$
\begin{align}
    \label{eq:pic_value_feedforward}
    \Tilde{H}^{*}(\state{0}) &= -\log \expect{
        p_\theta(\state{1:T} \mid \state{0})
    }{
        \expon{
            \sum_{t=1}^{T} \left(J(\state{t}) + L(\state{t}) \right) - \Bar{\mathcal{J}}(\state{0})
        }
    },
\end{align}
and the feedback controller $q_\phi$
\begin{align}
    \label{eq:pic_value_feedback}
    \Tilde{H}^{*}(\state{0}) &= -\log \expect{
        q_\phi(\state{1:T} \mid \state{0})
    }{
        \expon{
            \sum_{t=1}^{T} \mathcal{J}_{\theta,\phi}(t) - \Bar{\mathcal{J}}(\state{0})
        }
    }.
\end{align}
These equations employ ``smooth'' minimization rather than ``hard'' recursive minimization, and so they support feedforward planning, feedback-driven updating, and sensitivity of behavior to risk \cite{Theodorou2012,Pan2014}. Jensen's inequality will then yield a tractable upper bound on the optimal differential surprise-to-go under the feedback controller $q_{\phi}$
\begin{align}
    \label{eq:valbo}
    \Tilde{H}^{*}(\state{0})
    &\leq -\expect{
        q_\phi(\state{1:T} \mid \state{0})
    }{
        \sum_{t=1}^{T} h(t; \state{0})
    } = \Tilde{\mathcal{F}}^{*}_{\theta,\phi}.
\end{align}
Minimizing this \emph{differential free energy} \(\Tilde{\mathcal{F}}^{*}_{\theta,\phi}\) minimizes both the sensory surprise and the optimal surprise-to-go function by proxy. This kind of information-theoretic upper bound on a surprisal term is precisely what predictive coding process theories~\cite{bastos2012canonical,bogacz2017tutorial} posit that the brain can optimize by updating $\theta$ and $\phi$.

\section{Discussion}
\label{sec:apic_summary}

\paragraph{Related work} Our formulation follows in a tradition of unifying active inference with optimal control approaches.  Our hierarchical graphical model follows most closely from the one featured by Friston~\cite{Friston2017} and Pezzulo~\cite{Pezzulo2018} for hierarchical active inference in decision making and motor control. In contrast to theirs, our model includes only a single observation at the lowest hierarchical level rather than one observed variable per level.

We also draw inspiration from information-theoretic control schemes not labelled by their authors as ``active inference''. Piray and Daw~\cite{piray2021linear} considered a path-integral control approach to planning and reinforcement learning, which they related to grid cells in the entorhinal cortex. Mitchell et al~\cite{mitchell2019minimum} modeled motor learning as minimization of a free energy functional. Nasriany et al's work on distribution-conditioned reinforcement learning gave us our scheme for parameterizing reference distributions \cite{Nasiriany2021}, and Sennesh et al~\cite{Sennesh2021} applied such an objective to active inference modeling of interoception and allostatic regulation.

\paragraph{Implementations} We employed the infinite-horizon, average-surprise criterion to fit with the apparent time-averaging of dopamine signals in the brain \cite{daw2000behavioral,shadmehr2020vigor}, but algorithms for this control criterion remain an active research area with no standard approach. A recent survey~\cite{Lanillos2021} showed that most software implementations of active inference models still involve either finite horizons or exponential discounting criteria. Those which do support infinite horizons and nonlinear model families mostly take algorithmic inspiration from reinforcement learning (RL).

In that domain, Tadepalli and Ok~\cite{Tadepalli1998} published the first model-based RL algorithm for our criterion in 1998, while Baxter and Bartlett~\cite{Baxter2001} gave a biased policy gradient estimator. It took another decade for Alexander and Brown~\cite{Alexander2010} to give a recursive decomposition for average-cost temporal-difference learning. Zhang and Ross~\cite{Zhang2021} have only recently published the first adaptation of ``deep'' reinforcement learning algorithms (based on function approximation) to the average-cost criterion, which remains model free. Jafarnia-Jahromi et al~\cite{Jafarnia-Jahromi2022} recently gave the first algorithm for infinite-horizon, average-cost partially observable problems with a known observation density and unknown dynamics.

\paragraph{Conclusion} This concludes the derivation of an infinite-horizon, average-surprise formulation of active inference. Since our formulation contextualizes behavioral episodes, it only requires planning and adjusting behavior in context (e.g.~from timesteps $1$ to $T$), despite optimizing a ``global'' (indefinite) surprise rate (Equation~\ref{eq:indefinite_capture_rate}). We suggest that this formulation of active inference can advance a probabilistic approach to model-based, hierarchical feedback control~\cite{Pezzulo2016,Merel2019}.

\appendix

\section{Detailed derivations}
\label{app:derivations}

This appendix provides detailed derivations for equations used elsewhere, particularly where doing so would have distracted from the flow of the paper.

\begin{proposition}[Variational free energy as divergence from an unnormalized joint distribution]
\label{prop:vfe}
The variational free energy (Equation~\ref{eq:vfe}) is defined as the Kullback-Leibler divergence of the recognition model $q_\phi$ from the unnormalized joint distribution of the generative model $p_\theta$
\begin{align*}
    \mathcal{F}_{\theta, \phi}(t) &= \KL{
        q_\phi(s^{(1:2)}_{t}, a^{(1:2)}_{t} \mid o_{t}, \state{t+1}, \state{t-1})
    }{
        p_\theta(\state{t} \mid \state{t-1})
    },
\end{align*}
and therefore equals a sum of the cross entropy between the recognition model and the sensory likelihood and the exclusive KL divergence from the recognition model to the generative model over the latent variables
\begin{multline*}
    \mathcal{F}_{\theta, \phi}(t) = \expect{q_\phi}{- \log p_{\theta}(o_t \mid a^{(1)}_t, s^{(1)}_t)} + \\ \KL{q_\phi(s^{(1:2)}_{t}, a^{(1:2)}_{t} \mid o_{t}, \state{t+1}, \state{t-1})}{p_{\theta}(s^{(1:2)}_t, a^{(1:2)}_t \mid \state{t-1})}.
\end{multline*}
\end{proposition}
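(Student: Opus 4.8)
The plan is to unfold the Kullback--Leibler divergence in the definition and show, factor by factor, that it reorganizes into the sensory cross-entropy plus a KL divergence over the latent variables alone. Everything rests on the hierarchical factorization of $p_\theta(\state{t}\mid\state{t-1})$ already recorded in Equation~\ref{eq:state_conditional}, so no new probabilistic machinery is needed; the work is bookkeeping.

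First I would write the definition out explicitly as
\[
\mathcal{F}_{\theta,\phi}(t) = \expect{q_\phi}{\log q_\phi(s^{(1:2)}_{t}, a^{(1:2)}_{t} \mid o_{t}, \state{t+1}, \state{t-1})} - \expect{q_\phi}{\log p_{\theta}(\state{t}\mid\state{t-1})},
\]
where the expectation is over $(s^{(1:2)}_{t}, a^{(1:2)}_{t}) \sim q_\phi$ while $(o_t, a_t)$ are held at their realized values. This is the standard reading of ``the KL divergence of a normalized density from an unnormalized one'', and it is exactly why the proposition is entitled to call $p_\theta(\state{t}\mid\state{t-1})$, sliced at fixed $(o_t,a_t)$, an \emph{unnormalized} joint distribution over the latents (it integrates over the latents to $p_\theta(o_t,a_t\mid\state{t-1}) \le 1$ in the discrete case).

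Second I would substitute Equation~\ref{eq:state_conditional} together with the level-wise splits of each conditional into a policy factor times a dynamics factor. The crucial observation is that the only factor carrying $o_t$ as an \emph{emitted} variable is the likelihood $p_\theta(o_t \mid a^{(1)}_t, s^{(1)}_t)$, while the product of all remaining latent-level factors,
\[
\pi(a^{(1)}_t \mid s^{(1)}_t, a^{(2)}_t)\, p_\theta(s^{(1)}_t \mid s^{(1)}_{t-1}, s^{(2)}_t, a_{t-1})\, \pi(a^{(2)}_t \mid s^{(2)}_t)\, p_\theta(s^{(2)}_t \mid s^{(2)}_{t-1}, a_{t-1}),
\]
is precisely the generative prior over the latents, $p_\theta(s^{(1:2)}_t, a^{(1:2)}_t \mid \state{t-1})$. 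Then, by linearity of expectation, $-\expect{q_\phi}{\log p_\theta(\state{t}\mid\state{t-1})}$ splits as $\expect{q_\phi}{-\log p_\theta(o_t\mid a^{(1)}_t, s^{(1)}_t)}$ plus $-\expect{q_\phi}{\log p_\theta(s^{(1:2)}_t, a^{(1:2)}_t\mid\state{t-1})}$; adding back $\expect{q_\phi}{\log q_\phi}$ and recognizing the definition of the KL divergence over the latents yields the claimed decomposition.

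The main obstacle I anticipate is purely the conditional-independence bookkeeping. One must read off from the graphical model (Figure~\ref{fig:allostasis_pgm}) exactly which variables each conditional factor depends on, confirm that no factor other than the likelihood contributes an $o_t$-dependent log-term inside the $q_\phi$-expectation, and handle the feedback-action policy $\pi(a_t\mid o_t,a^{(1)}_t)$ consistently with the paper's convention that $a_t$ is a \emph{realized} feedback signal folded into the conditioning rather than a modeled latent --- otherwise a spurious term $\expect{q_\phi}{-\log\pi(a_t\mid o_t,a^{(1)}_t)}$ would survive into the final expression. Once the factorization is pinned down, the remainder is a one-line application of linearity of expectation and the definition of KL divergence.
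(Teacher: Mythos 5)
Your proposal is correct and takes essentially the same route as the paper's proof: expand the KL divergence against the unnormalized joint $p_\theta(\state{t}\mid\state{t-1})$, peel off the likelihood factor $p_{\theta}(o_t \mid a^{(1)}_t, s^{(1)}_t)$, recognize the remaining factors as the latent prior $p_{\theta}(s^{(1:2)}_t, a^{(1:2)}_t \mid \state{t-1})$, and split by linearity of expectation into the cross-entropy plus the latent-variable KL term. If anything, you are more explicit than the paper about how the realized feedback-action factor $\pi(a_t \mid o_t, a^{(1)}_t)$ must be folded into the conditioning rather than left as a spurious log-term, a step the paper's proof passes over silently.
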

\begin{proof}
Taking a divergence between the (normalized) recognition model and the (unnormalized) joint generative model will yield
\begin{align*}
    \mathcal{F}_{\theta, \phi}(t) &= \KL{
        q_\phi(s^{(1:2)}_{t}, a^{(1:2)}_{t} \mid o_{t}, \state{t+1}, \state{t-1})
    }{
        p_\theta(\state{t} \mid \state{t-1})
    } \\
    &= \expect{
        q_\phi(s^{(1:2)}_{t}, a^{(1:2)}_{t} \mid o_{t}, \state{t+1}, \state{t-1})
    }{
        -\log \frac{
            p_\theta(\state{t} \mid \state{t-1})
        }{
            q_\phi(s^{(1:2)}_{t}, a^{(1:2)}_{t} \mid o_{t}, \state{t+1}, \state{t-1})
        }
    } \\
    &= \expect{
        q_\phi(s^{(1:2)}_{t}, a^{(1:2)}_{t} \mid o_{t}, \state{t+1}, \state{t-1})
    }{
        - \log \frac{
            p_{\theta}(o_t \mid a^{(1)}_t, s^{(1)}_t)
            p_{\theta}(s^{(1:2)}_t, a^{(1:2)}_t \mid \state{t-1})
        }{
            q_\phi(s^{(1:2)}_{t}, a^{(1:2)}_{t} \mid o_{t}, \state{t+1}, \state{t-1})
        }
    } \\
    &= \expect{q_\phi}{- \log p_{\theta}(o_t \mid a^{(1)}_t, s^{(1)}_t)} -
    \expect{
        q_\phi
    }{
        \log \frac{
            p_{\theta}(s^{(1:2)}_t, a^{(1:2)}_t \mid \state{t-1})
        }{
            q_\phi(s^{(1:2)}_{t}, a^{(1:2)}_{t} \mid o_{t}, \state{t+1}, \state{t-1})
        }
    },
\end{align*}
as required.
\end{proof}

\begin{proposition}[KL divergence of the optimal feedback controller from the feedforward controller]
\label{prop:optimal_kl}
The exclusive Kullback-Leibler divergence of the optimal feedback controller $q^{*}$ from the feedforward generative model $p_\theta$ is
\begin{multline}
    \label{eq:optimal_kl}
    \KL{q^{*}(\state{t+1} \mid \state{t})}{p_\theta(\state{t+1} \mid \state{t})}
    =
    -\expect{q^{*}(\state{t+1} \mid \state{t})}{
        \Tilde{H}^{*}(t+1; \state{0})
    } - \\
    \log \expect{p_\theta(\state{t+1} \mid \state{t})}{\expon{-\Tilde{H}^{*}(t+1; \state{0})}}.
\end{multline}
\end{proposition}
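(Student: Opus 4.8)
The plan is to prove this by directly substituting the closed form of the optimal feedback controller (Equation~\ref{eq:stochastic_reference}) into the definition of the exclusive KL divergence and simplifying. This is essentially the standard Gibbs-variational computation: the optimal transition density is a Boltzmann-type reweighting of $p_\theta$ by $\expon{-\Tilde{H}^{*}(t+1;\state{0})}$, so the log-density-ratio collapses to an affine function of $\Tilde{H}^{*}$ plus the log-partition term.

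Concretely, first I would expand
\begin{align*}
    \KL{q^{*}(\state{t+1} \mid \state{t})}{p_\theta(\state{t+1} \mid \state{t})}
    &= \expect{q^{*}(\state{t+1}\mid\state{t})}{\log \frac{q^{*}(\state{t+1}\mid\state{t})}{p_\theta(\state{t+1}\mid\state{t})}}.
\end{align*}
Then I would substitute Equation~\ref{eq:stochastic_reference} for $q^{*}$ in the numerator of the log-ratio. The $p_\theta(\state{t+1}\mid\state{t})$ factor cancels against the denominator $p_\theta$, leaving
\begin{align*}
    \frac{q^{*}(\state{t+1}\mid\state{t})}{p_\theta(\state{t+1}\mid\state{t})} &= \frac{\expon{-\Tilde{H}^{*}(t+1;\state{0})}}{\expect{\state{t+1}\sim p_\theta(\cdot\mid\state{t})}{\expon{-\Tilde{H}^{*}(t+1;\state{0})}}}.
\end{align*}
The key observation to flag is that the denominator here is the normalizing constant (the expected surprise-to-go weight discussed after Equation~\ref{eq:stochastic_reference}); it is a function of $\state{t}$ only, since $\state{t+1}$ has been integrated out, and so it is constant with respect to the outer expectation over $q^{*}(\state{t+1}\mid\state{t})$.

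Taking logarithms then gives $-\Tilde{H}^{*}(t+1;\state{0}) - \log \expect{p_\theta(\state{t+1}\mid\state{t})}{\expon{-\Tilde{H}^{*}(t+1;\state{0})}}$, and taking the expectation under $q^{*}(\state{t+1}\mid\state{t})$ splits this into the two stated terms: the first term yields $-\expect{q^{*}(\state{t+1}\mid\state{t})}{\Tilde{H}^{*}(t+1;\state{0})}$, and the second term, being constant in $\state{t+1}$, passes through the expectation unchanged to give $-\log \expect{p_\theta(\state{t+1}\mid\state{t})}{\expon{-\Tilde{H}^{*}(t+1;\state{0})}}$. This is exactly Equation~\ref{eq:optimal_kl}. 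I do not anticipate a genuine obstacle here — the only point requiring care is the bookkeeping of which variable ($\state{t}$ versus $\state{t+1}$) the partition function depends on, so that it may legitimately be pulled out of the expectation rather than treated as a random quantity.
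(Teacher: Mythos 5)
Your proposal is correct and follows essentially the same route as the paper's proof: substitute the closed form of $q^{*}$ from Equation~\ref{eq:stochastic_reference} into the KL divergence, cancel the $p_\theta$ factor, and take logarithms, with the normalizing constant (a function of $\state{t}$ only) passing through the expectation. Writing the integrand as $\log(q^{*}/p_\theta)$ rather than $-\log(p_\theta/q^{*})$ is only a cosmetic difference, and your explicit remark about which variable the partition function depends on is a point the paper leaves implicit.
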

\begin{proof}
We begin by writing out the definition of a KL divergence
\begin{align*}
    \KL{q^{*}(\state{t+1} \mid \state{t})}{p_\theta(\state{t+1} \mid \state{t})}
    &= \expect{q^{*}(\state{t+1} \mid \state{t})}{
        -\log \frac{
            p_\theta(\state{t+1} \mid \state{t})
        }{
            q^{*}(\state{t+1} \mid \state{t})
        }
    }.
\end{align*}
The definition of $q^{*}$ in terms of $p_\theta$ (Equation~\ref{eq:stochastic_reference}) allows the inner ratio of densities to simplify to
\begin{align*}
    \frac{
        p_\theta(\state{t+1} \mid \state{t})
    }{
        q^{*}(\state{t+1} \mid \state{t})
    } &= p_\theta(\state{t+1} \mid \state{t}) \left(q^{*}(\state{t+1} \mid \state{t}) \right)^{-1} \\
    &= \cancel{p_\theta(\state{t+1} \mid \state{t})} \left(\frac{
        \expect{p_\theta(\state{t+1} \mid \state{t})}{\expon{-\Tilde{H}^{*}(t+1; \state{0})}}
    }{
        \expon{-\Tilde{H}^{*}(t+1; \state{0})} \cancel{p_\theta(\state{t+1} \mid \state{t})}
    }\right) \\
    \frac{
        p_\theta(\state{t+1} \mid \state{t})
    }{
        q^{*}(\state{t+1} \mid \state{t})
    }
    &= \frac{
        \expect{p_\theta(\state{t+1} \mid \state{t})}{\expon{-\Tilde{H}^{*}(t+1; \state{0})}}
    }{
        \expon{-\Tilde{H}^{*}(t+1; \state{0})}
    }.
\end{align*}
This simplified ratio therefore has the logarithm
\begin{align*}
    \log \frac{
        p_\theta(\state{t+1} \mid \state{t})
    }{
        q^{*}(\state{t+1} \mid \state{t})
    }
    &=
    \log \expect{p_\theta(\state{t+1} \mid \state{t})}{\expon{-\Tilde{H}^{*}(t+1; \state{0})}} + \Tilde{H}^{*}(t+1; \state{0})
\end{align*}
and the divergence becomes
\begin{multline*}
    \KL{q^{*}(\state{t+1} \mid \state{t})}{p_\theta(\state{t+1} \mid \state{t})}
    = \\
    -\expect{q^{*}(\state{t+1} \mid \state{t})}{
        \Tilde{H}^{*}(t+1; \state{0})
    } - \log \expect{p_\theta(\state{t+1} \mid \state{t})}{\expon{-\Tilde{H}^{*}(t+1; \state{0})}}.
\end{multline*}
\end{proof}

\begin{proposition}[Path-integral expression for the optimal differential surprise-to-go]
\label{prop:path_integral_bellman}
The optimal differential surprise-to-go function defined by the Bellman equation (Equation~\ref{eq:soft_differential_bellman})
\begin{align*}
    \Tilde{H}^{*}(t; \state{0}) &= h(t; \state{0}) + \min_{q_\phi} \expect{\state{t+1}\sim q_\phi(\cdot \mid \state{t})}{\Tilde{H}^{*}(t+1; \state{0})}
\end{align*}
can be simplified by substituting in $q^{*}$ to obtain a path-integral expression
\begin{align*}
    \Tilde{H}^{*}(\state{0}) &= -\log \expect{
        p_\theta(\state{1:T} \mid \state{0})
    }{
        \expon{
            \sum_{t=1}^{T} \left(J(\state{t}) + L(\state{t}) \right) - \Bar{\mathcal{J}}(\state{0})
        }
    }, \\
    &= -\log \expect{
        q_\phi(\state{1:T} \mid \state{0})
    }{
        \expon{
            \sum_{t=1}^{T} \mathcal{J}_{\theta,\phi}(t) - \Bar{\mathcal{J}}(\state{0})
        }
    }.
\end{align*}
\end{proposition}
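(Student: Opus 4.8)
The plan is to treat Equation~\ref{eq:soft_differential_bellman} as a \emph{linearly-solvable} (path-integral / KL-control) Bellman equation in the sense of Todorov~\cite{Todorov2009a}: the ``smooth'' minimization over feedback transition densities has the closed-form minimizer $q^{*}$ of Equation~\ref{eq:stochastic_reference}, substituting it turns the per-step optimization into a log-partition function, and the resulting recursion becomes \emph{linear} in the exponentiated surprise-to-go, so that iterating it across the episode produces the claimed path integral.

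Concretely I would proceed in four steps. First, read the ``$\min_{q_\phi}$'' in Equation~\ref{eq:soft_differential_bellman} as the KL-regularized minimization of $\KL{q_\phi(\cdot \mid \state{t})}{p_\theta(\cdot \mid \state{t})} + \expect{\state{t+1}\sim q_\phi(\cdot \mid \state{t})}{\Tilde{H}^{*}(t+1;\state{0})}$, the KL penalty being exactly the ``control cost'' that the free energies $\mathcal{F}_{\theta,\phi}$ contribute to the running cost $h$; completing the normalization inside the logarithm rewrites this objective as $\KL{q_\phi(\cdot \mid \state{t})}{q^{*}(\cdot \mid \state{t})} - \log\expect{p_\theta(\cdot \mid \state{t})}{\expon{-\Tilde{H}^{*}(t+1;\state{0})}}$, so nonnegativity of the KL identifies $q^{*}$ (Equation~\ref{eq:stochastic_reference}) as the minimizer, with minimum value $-\log\expect{p_\theta(\cdot \mid \state{t})}{\expon{-\Tilde{H}^{*}(t+1;\state{0})}}$ --- which is Proposition~\ref{prop:optimal_kl} rearranged. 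Second, substituting this back gives the scalar recursion $\Tilde{H}^{*}(t;\state{0}) = h(t;\state{0}) - \log\expect{p_\theta(\cdot \mid \state{t})}{\expon{-\Tilde{H}^{*}(t+1;\state{0})}}$. Third, exponentiating linearizes it: with $z_t(\state{t}) = \expon{-\Tilde{H}^{*}(t;\state{0})}$ the recursion becomes $z_t(\state{t}) = \expon{-h(t;\state{0})} \, \expect{\state{t+1}\sim p_\theta(\cdot \mid \state{t})}{z_{t+1}(\state{t+1})}$. Fourth, unrolling this linear backup across the episode and collapsing the nested expectations by the tower rule over $p_\theta(\state{1:T} \mid \state{0})$ yields $\Tilde{H}^{*}(\state{0}) = -\log\expect{p_\theta(\state{1:T} \mid \state{0})}{\expon{-\sum_{t=1}^{T} h(t;\state{0})}}$. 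Expanding $h(t;\state{0}) = \mathcal{J}_{\theta,\phi}(t) - \Bar{\mathcal{J}}(\state{0})$ and, for the feedforward form, replacing the recognition-model running cost by the generative surprisal $J(\state{t}) + L(\state{t})$ gives Equation~\ref{eq:pic_value_feedforward}; keeping the running cost as $\mathcal{J}_{\theta,\phi}(t)$ --- equivalently, changing the path measure from $p_\theta$ to $q_\phi$ and folding the resulting likelihood ratio over the latent and reference variables into the exponent, which Proposition~\ref{prop:vfe} recognizes as reassembling $\mathcal{F}_{\theta,\phi}(t)$ --- gives Equation~\ref{eq:pic_value_feedback}.

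The main obstacle is the bookkeeping rather than the algebra. One must check that the KL divergence serving as the soft-minimization regularizer at a given backup is the \emph{same} divergence that the adjacent $\mathcal{F}_{\theta,\phi}$ term contributes to $h$, so that it is counted once and not twice when $h(t;\state{0})$ is added back. The passage to the $q_\phi$ form is similarly delicate: under the recognition path measure the observations $o_{1:T}$ are conditioned on rather than sampled, so the change of measure contributes a Radon--Nikodym factor only over the latent and reference variables, and verifying that this factor, together with $\sum_{t}(J(\state{t}) + L(\state{t}))$ and the mean-centering term, reconstitutes exactly $\sum_{t} \mathcal{J}_{\theta,\phi}(t)$ is where Proposition~\ref{prop:vfe} does the work. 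Finally, some care is needed for the $T \to \infty$ average-cost limit: $\Tilde{H}^{*}$ is a differential (relative) value function fixed only up to an additive constant, and subtracting $\Bar{\mathcal{J}}(\state{0})$ at each step is exactly what keeps the path integral from degenerating; making the limit rigorous requires the usual ergodicity and uniform-integrability assumptions, which are left implicit here.
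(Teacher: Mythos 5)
Your proposal takes essentially the same route as the paper: substitute the closed-form optimizer $q^{*}$ (Equation~\ref{eq:stochastic_reference}), use the log-partition/KL identity of Proposition~\ref{prop:optimal_kl} to cancel the recursive expectation against the divergence term in $\mathcal{J}_{\theta,\phi}(t)$, unroll the resulting log-sum-exp backup (your exponentiated-value linearization just makes the paper's ``unfolding of the recursive expectation'' explicit), and obtain the $q_\phi$ form by the same change-of-measure argument. The only discrepancy is a sign: your (conventionally consistent) unrolling yields $\expon{-\sum_t h(t;\state{0})}$, whereas the stated result has $+\sum_t\left(J(\state{t})+L(\state{t})\right) - \Bar{\mathcal{J}}(\state{0})$ inside the exponential --- a flip that traces to the paper's own substitution step writing $\Bar{\mathcal{J}}(\state{0}) - \mathcal{J}_{\theta,\phi}(t)$ where Equation~\ref{eq:advantage_function} defines $h(t;\state{0}) = \mathcal{J}_{\theta,\phi}(t) - \Bar{\mathcal{J}}(\state{0})$, not to a flaw in your argument.
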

\begin{proof}
Substituting Equation~\ref{eq:stochastic_reference} into Equation~\ref{eq:soft_differential_bellman} yields
\begin{align}
    \label{eq:optimal_differential_bellman}
    \Tilde{H}^{*}(t; \state{0}) &= \Bar{\mathcal{J}}(\state{0}) - \mathcal{J}_{\theta,\phi}(t) + \expect{q^{*}(\state{t+1} \mid \state{t})}{\Tilde{H}^{*}(t+1; \state{0})},
\end{align}
whose recursive term is $\expect{q^{*}(\state{t+1} \mid \state{t})}{\Tilde{H}^{*}(t+1; \state{0})}$. The divergence term in $\mathcal{J}$ (Equation~\ref{eq:pic_objective_long}) will cancel this term. By Proposition~\ref{prop:optimal_kl} the divergence equals
\begin{multline*}
    \KL{q^{*}(\state{t+1} \mid \state{t})}{p_\theta(\state{t+1} \mid \state{t})}
    = \\
    -\expect{q^{*}(\state{t+1} \mid \state{t})}{
        \Tilde{H}^{*}(t+1; \state{0})
    } - \log \expect{p_\theta(\state{t+1} \mid \state{t})}{\expon{-\Tilde{H}^{*}(t+1; \state{0})}}.
\end{multline*}
Substituting Equation~\ref{eq:optimal_kl} into Equation~\ref{eq:pic_objective_long} will yield
\begin{multline*}
    -\mathcal{J}_{\theta,\phi}(t) = \expect{q^{*}(\state{t+1} \mid \state{t})}{\Tilde{H}^{*}(t+1; \state{0})} + \log \expect{p_\theta(\state{t+1} \mid \state{t})}{\expon{-\Tilde{H}^{*}(t+1; \state{0})}} \\
    + \expect{q_\phi}{-J(\state{t})} + \expect{q_\phi}{-L(\state{t})},
\end{multline*}
whose first term will cancel the recursive optimization when substituted into Equation~\ref{eq:optimal_differential_bellman}. The result will be a ``smoothly minimizing'' expression for the optimal differential surprise-to-go
\begin{multline*}
    \Tilde{H}^{*}(t; \state{0}) = \Bar{\mathcal{J}}(\state{0}) - \left(J(\state{t}) + L(\state{t})\right) \\ -\log \expect{p_\theta(\state{t+1} \mid \state{t})}{\expon{-\Tilde{H}^{*}(t+1; \state{0})}},
\end{multline*}
and after unfolding of the recursive expectation, a path-integral expression for the optimal differential surprise-to-go
\begin{align*}
    \Tilde{H}^{*}(\state{0}) &= -\log \expect{
        p_\theta(\state{1:T} \mid \state{0})
    }{
        \expon{
            \sum_{t=1}^{T} \left(J(\state{t}) + L(\state{t}) \right) - \Bar{\mathcal{J}}(\state{0})
        }
    }.
\end{align*}
Sampling a trajectory of states from a feedback controller $q_\phi$ instead of the feedforward planner $p_\theta$ will then result in a nonzero divergence term
\begin{align*}
    \Tilde{H}^{*}(\state{0}) &= -\log \expect{
        q_\phi(\state{1:T} \mid \state{0})
    }{
        \expon{
            \sum_{t=1}^{T} \mathcal{J}_{\theta,\phi}(t) - \Bar{\mathcal{J}}(\state{0})
        }
    }.
\end{align*}
\end{proof}

%
%
%
\bibliographystyle{splncs04}
\bibliography{main}

\end{document}